\mathchardef\ordinarycolon\mathcode`\:
\newtheorem*{mthm*}{Main Theorem}
\newtheorem{thm}{Theorem}
\newtheorem{lem}[thm]{Lemma}
\newtheorem{rem}[thm]{Remark}
\begin{document}

\title[Reconstruction and isomorphism testing]
{Computational complexity of\\ reconstruction and isomorphism testing for designs and line graphs}

\author{Michael Huber}

%\address{Wilhelm-Schickard-Institute for Computer Science, University of Tuebingen, Sand~13,
%D-72076 Tuebingen, Germany}
%
%\email{michael.huber@uni-tuebingen.de}

\subjclass[2000]{51E10, 05B05, 68R10, 68Q25}

\keywords{Computational complexity, reconstructibility, isomorphism testing, combinatorial design, line graph, graph isomorphism problem, hypergraph isomorphism problem}

\thanks{The author gratefully acknowledges support by the Deutsche Forschungsgemeinschaft (DFG) via a Heisenberg grant (Hu954/4) and a Heinz Maier-Leibnitz Prize grant (Hu954/5).}

\date{August 11, 2009; and in revised form June 21, 2010}

%%% ----------------------------------------------------------------------
\maketitle
%%% ----------------------------------------------------------------------

\vspace*{-0.5cm}

\begin{center}
{\footnotesize Wilhelm-Schickard-Institute for Computer Science\\
University of Tuebingen\\
Sand~13, D-72076 Tuebingen, Germany\\
E-mail: {\texttt{michael.huber@uni-tuebingen.de}}}
\end{center}

\smallskip

\begin{abstract}
Graphs with high symmetry or regularity are the main source for experimentally hard instances of the notoriously difficult graph isomorphism problem. In this paper, we study the computational complexity of isomorphism testing for line graphs of \mbox{$t$-$(v,k,\lambda)$} designs. For this class of highly regular graphs, we obtain a worst-case running time of $O(v^{\log v + O(1)})$ for bounded parameters $t,k,\lambda$.
In a first step, our approach makes use of the Babai--Luks algorithm to compute canonical forms of \mbox{$t$-designs}. In a second step, we show that \mbox{$t$-designs} can be reconstructed from their line graphs in polynomial-time. The first is algebraic in nature, the second purely combinatorial. For both, profound structural knowledge in design theory is required. Our results extend earlier complexity results about isomorphism testing of graphs generated from Steiner triple systems and block designs.
\end{abstract}

%%% ----------------------------------------------------------------------

\section{Introduction}\label{intro}

The Graph Isomorphism (GI) problem consists in deciding whether two given finite graphs are isomorphic -- that is,
whether there exists an edge-preserving bijection between the vertex sets of the graphs. Besides of its practical importance, the inability to directly classify the GI problem into either of the conventional complexity classes $\mathsf{P}$ or $\mathsf{NP}$-complete until now have made it one of the central topics in structural complexity theory. Consequently, it is of interest to identify the difficult instances of the problem.

The best worst-case algorithm for arbitrary graphs with $v$ vertices has running time $\exp{\big(O(\sqrt{v \log v})\big)}$, see~\cite{BKL1983,BL1983}. This has mainly been achieved by a combination of Luks' seminal polynomial-time algorithm for graphs of bounded degree~\cite{Luks1982}, together with a combinatorial degree reduction due to Zemlyachenko et al.~\cite{Zem1985}. After a quarter-century, this moderately exponential bound for graph isomorphism still remains the state of the art despite extensive efforts.

Apparently, many graphs that seem to capture much of the computational difficulty are obtained from highly regular combinatorial structures, like combinatorial designs and related configurations, see~\cite{Cor1970,Math1978}. Hence, it is a primary goal to reduce for these types of graphs the leading $\sqrt{v}$ term in the exponent to $v^{1/2-\epsilon}$ for some constant $\epsilon>0$.
For important special cases, that of strongly regular graphs and that of line graphs derived from Steiner \mbox{$2$-designs},
Spielman~\cite{Spiel1996} reduced the exponent of the exponent to $1/3$ and $1/4$, respectively. For the former, Babai~\cite{Bab1980} had initially given an elementary combinatorial algorithm in $v^{O (\sqrt{v} \log v)}$ time.
Far more efficient isomorphism tests (polynomial-time or even better) are known for several parameterized classes with bounded values for their parameters. The most prominent classes are planar graphs, graphs of bounded degree, bounded genus, bounded color class, or bounded eigenvalue multiplicity. For a unifying treatment of these parameterized classes, see~\cite{Fuer1995}. A strict generalization of the results for bounded degree and bounded genus was obtained in~\cite{Mill1983b,Mill1983}.
On the other hand, GI-completeness (i.e. there exists a polynomial-time Turing reduction from the GI problem) has been proved for a number of restricted graph classes, including regular graphs, bipartite graphs, chordal graphs, self-complementary graphs, split graphs, and perfect graphs (cf.~\cite{Zem1985} for some further classes).

In this paper, we consider the computational problem of testing isomorphism of line graphs derived from \mbox{$t$-$(v,k,\lambda)$} designs. For bounded parameters $t,k,\lambda$, we obtain a sub-exponential algorithm for this important special class of the GI problem.
This extends earlier complexity results about isomorphism testing of graphs generated from Steiner triple systems and block designs.
Moreover, as \mbox{$t$-$(v,k,\lambda)$} designs can be viewed as \mbox{$k$-uniform} hypergraphs on $v$ vertices, this problem is also interesting in view of the recent moderately exponential bound for hypergraph isomorphism: Babai and Codenotti~\cite{Babai2008} have shown that isomorphism of hypergraphs of bounded rank with $v$ vertices can be tested in time $\exp{\big(\widetilde{O}(\sqrt{v})\big)}$ (where, as usual, the $\widetilde{O}$-notation suppresses polylogarithmic factors).

\smallskip

We state our main result:

\begin{mthm*}
Isomorphism of line graphs of \mbox{$t$-$(v,k,\lambda)$} designs can be determined in $O(v^{\log v + O(1)})$ time for bounded parameters $t,k,\lambda$.
\end{mthm*}

In a first step, our approach makes use of the Babai--Luks algorithm to compute canonical forms of \mbox{$t$-designs}. In a second step, we show that \mbox{$t$-designs} can be reconstructed from their line graphs in polynomial-time. The first is algebraic in nature, the second purely combinatorial. For both, profound structural knowledge in design theory is required. Specifically, we make use of the Ray-Chauduri--Wilson theorem on the minimal number of blocks, an extension of the Erd{\H{o}}s--Ko--Rado theorem to \mbox{$t$-designs} due to Rand, as well as a recent result of Kreher and Rees concerning the maximal size of a subdesign in a \mbox{$t$-design}.

\subsubsection*{\bf Related Work}

There are only a few known complexity results about isomorphism problems related to combinatorial \mbox{$t$-designs}:
Prior to Spielman's result for Steiner \mbox{$2$-designs},  Miller~\cite{Mil1978} had shown that the specific case of isomorphism of line graphs derived from Steiner triple systems (i.e. Steiner \mbox{$2$-designs} with block size $3$) can be determined in sub-exponential, $O (v^{\log v + O(1)})$, time. His proof uses the fact that a Steiner triple system can be represented as a quasigroup, and hence has a set of at most $1 + \log v$ generators. He also obtained the same bound for testing isomorphism of graphs from Latin squares. Moreover, he gave an $O (v^{\log \log v + O(1)})$ isomorphism algorithm for affine and projective planes. Miller's algorithm has been applied by M.~Colbourn~\cite{Colb1979} to perform isomorphism of Steiner \mbox{$t$-designs} with block size $t+1$ in $O (v^{\log v + O(1)})$ time.
Concerning isomorphism testing of block designs (i.e. \mbox{$2$-designs} with arbitrary $\lambda$), Babai and Luks~\cite{BL1983} derived as a consequence of Luks' techniques~\cite{Luks1982} an algorithm for bounded block size $k$ and bounded $\lambda$ in time $O (v^{\log v + f(k,\lambda)})$. On the other hand, C.~Colbourn and M.~Colbourn~\cite{Colb1981} verified that the isomorphism problem for block designs is GI-complete, even for triple systems.
For a few other results regarding specific designs, we refer to the survey~\cite[Sect.\,3]{Colb1985}. We note that the complexity of the Steiner \mbox{$t$-design} isomorphism problem in relation to the GI problem is still unresolved (even for fixed $t$). This is also the case for the isomorphism problem of Steiner triple and quadruple systems, respectively.

\subsubsection*{\bf Overview}

Relevant definitions and concepts from combinatorial design theory including line graphs will be summarized in Section~\ref{Prelim}. The reader may want to skim this section and return to it when necessary. In Section~\ref{isotest}, we apply the Babai--Luks algorithm to compute canonical forms of \mbox{$t$-designs}. In Section~\ref{Stgraph}, we show that \mbox{$t$-designs} can be reconstructed from their line graphs in polynomial-time. We finally combine the results of these sections to prove our main theorem.

\smallskip

For further detailed discussion in particular on the GI problem, we refer to the excellent literature: the books by Hoffmann~\cite{Hoff1982}, K\"{o}bler, Sch\"{o}ning and Tor\'{a}n~\cite{Koeb1993} as well as the surveys by Arvind and Tor\'{a}n~\cite{Tor2005}, Babai~\cite{Babai1995}, Booth and Colbourn~\cite{BoCol1979}, Goldberg~\cite{Gold2004}, K\"{o}bler~\cite{Koeb2006}, Read and Corneil~\cite{ReCo1977}, and Zemlyachenko et al.~\cite{Zem1985}. The current standard reference on the complexity of group-theoretic computation is Seress~\cite{Ser2003}.

\newpage

%%% ----------------------------------------------------------------------

\section{Designs and Line Graphs}\label{Prelim}

\subsubsection*{\bf Combinatorial Designs}

Combinatorial design theory is a rich subject on the
interface of several disciplines, including coding and information
theory, cryptography, combinatorics, group theory, and geometry. In
particular, the study of designs with high symmetry properties has a
very long history and establishes deep connections between these
areas (see,~e.g.,~\cite{Colb1989,cosl98,Hu_isit2009,Hu2008,Hu2010,Hu_cod2008,wisl77}).

For positive integers $t \leq k
\leq v$ and $\lambda$, we define a \emph{$t$-$(v,k,\lambda)$
design} to be a finite incidence structure
\mbox{$\mathcal{D}=(X,\mathcal{B},I)$}, where $X$ denotes a set of
\emph{points}, $\left| X \right| =v$, and $\mathcal{B}$ a set of
\emph{blocks}, $\left| \mathcal{B} \right| =b$, satisfying the
following regularity properties: each block $B \in \mathcal{B}$ is
incident with $k$ points, and each \mbox{$t$-subset} of $X$ is
incident with $\lambda$ blocks. A \emph{flag} of $\mathcal{D}$ is an
incident point-block pair $(x,B) \in I$ with $x \in X$ and $B \in
\mathcal{B}$. If $t<k<v$ holds, then we speak of a \emph{non-trivial} \mbox{$t$-design}.
In this paper, `repeated blocks' are not allowed, that is, the same \mbox{$k$-element} subset
of points may not occur twice as a block. Thus, alternatively a \mbox{$t$-$(v,k,\lambda)$ design} can be viewed as a
\mbox{$k$-uniform} hypergraph on $v$ vertices with the property that every set of $t$ vertices is contained in $\lambda$ common edges.

Incidence preserving maps which take points to points and blocks to
blocks are of fundamental importance. We recall the formal definition of an isomorphism between incidence structures:
Let \mbox{$\mathcal{S}_1=(X_1, \mathcal{B}_1, I_1)$} and \mbox{$\mathcal{S}_2=(X_2,
\mathcal{B}_2, I_2)$} be two incidence structures. A bijective map
\[\alpha: X_1 \cup \mathcal{B}_1 \longrightarrow X_2 \cup \mathcal{B}_2\] is an
\emph{isomorphism}\index{isomorphism} of $\mathcal{S}_1$ onto
$\mathcal{S}_2$, if the following holds:
\begin{enumerate}
\item[(i)] for $x \in X_1$ and $B \in \mathcal{B}_1$, we have $x^{\alpha} \in
X_2$ and $B^{\alpha} \in \mathcal{B}_2$,

\smallskip

\item[(ii)] for all $x \in X_1$ and all $B \in \mathcal{B}_1$, we have
\[(x,B) \in I_1 \Longleftrightarrow (x^{\alpha}, B^{\alpha}) \in I_2.\]
\end{enumerate}
In this case, the incidence structures $\mathcal{S}_1$ and $\mathcal{S}_2$ are
\emph{isomorphic}. An isomorphism of an incidence structure $\mathcal{S}$ onto
itself is called an \emph{automorphism} of $\mathcal{S}$.
The full group of automorphisms of an incidence structure $\mathcal{S}$ will be denoted
by $\mbox{Aut}(\mathcal{S})$.

For historical reasons, a \mbox{$t$-$(v,k,\lambda)$ design} with
$\lambda =1$ is called a \emph{Steiner \mbox{$t$-design}} (sometimes
also a \emph{Steiner system}). The special case of a Steiner design with parameters $t=2$ and $k=3$
is called a \emph{Steiner triple system} $\mbox{STS}(v)$ of order $v$. A Steiner design
with parameters $t=3$ and $k=4$ is called a \emph{Steiner quadruple system} $\mbox{SQS}(v)$ of order $v$.
For example if we consider Steiner quadruple systems, the vector space $\mathbb{Z}_2^d$ with the set $\mathcal{B}$ of blocks taken to
be the set of all subsets of four distinct elements of $\mathbb{Z}_2^d$ whose vector sum is zero, is a boolean $\mbox{SQS}(2^d)$. More geometrically, these $\mbox{SQS}(2^d)$ consist of the points and planes of the \mbox{$d$-dimensional} binary affine space $AG(d,2)$.
\begin{figure}[htp]

\centering
\begin{tikzpicture}[scale=1.0,thick]

\filldraw [draw=black!100,fill=black!100]

(0,0) circle (2pt)

(1.4,0) circle (2pt)

(0.7,0.35) circle (2pt)

(2.1,0.35) circle (2pt);

\draw

(0,0) -- (0,1.4)

(0,0) -- (0.7,0.35)

(1.4,0) -- (1.4,1.4)

(1.4,0) -- (2.1,0.35)

(0.7,0.35) -- (2.1,0.35)

(2.1,0.35) -- (2.1,1.7)

(0,1.4) -- (1.4,1.4)

(0.7,1.7) -- (2.1,1.7)

(1.4,1.4) -- (2.1,1.7)

(0,1.4) -- (0.7,1.7)

(0.7,0.35) -- (0.7,1.7);

\filldraw [draw=black!100,fill=black!0]

(0,1.4) circle(2pt)

(1.4,1.4) circle (2pt)

(0.7,1.7) circle(2pt)

(2.1,1.7) circle (2pt);

\draw[very thick]

(0,0) -- (1.4,0)

(0,0) -- (0.7,0.35)

(1.4,0) -- (2.1,0.35)

(0.7,0.35) -- (2.1,0.35);

\filldraw [draw=black!100,fill=black!100]

(3.5,0) circle (2pt)

(4.9,0) circle (2pt)

(4.2,1.7) circle(2pt)

(5.6,1.7) circle (2pt);

\draw (3.5,0) -- (4.9,0)

(3.5,0) -- (3.5,1.4)

(3.5,0) -- (4.2,0.35)

(4.9,0) -- (4.9,1.4)

(4.9,0) -- (5.6,0.35)

(4.2,0.35) -- (5.6,0.35)

(5.6,0.35) -- (5.6,1.7)

(3.5,1.4) -- (4.9,1.4)

(4.2,1.7) -- (5.6,1.7)

(4.9,1.4) -- (5.6,1.7)

(3.5,1.4) -- (4.2,1.7)

(4.2,0.35) -- (4.2,1.7);

\filldraw [draw=black!100,fill=black!0]

(4.2,0.35) circle (2pt)

(5.6,0.35) circle (2pt)

(3.5,1.4) circle(2pt)

(4.9,1.4) circle (2pt);

\draw[very thick]

(3.5,0) -- (4.9,0)

(3.5,0) -- (4.2,1.7)

(4.9,0) -- (5.6,1.7)

(4.2,1.7) -- (5.6,1.7);

\filldraw [draw=black!100,fill=black!100]

(7,0) circle (2pt)

(7.7,1.7) circle(2pt)

(8.4,1.4) circle (2pt)

(9.1,0.35) circle (2pt);

\draw (7,0) -- (8.4,0)

(7,0) -- (7,1.4)

(7,0) -- (7.7,0.35)

(8.4,0) -- (8.4,1.4)

(8.4,0) -- (9.1,0.35)

(7.7,0.35) -- (9.1,0.35)

(9.1,0.35) -- (9.1,1.7)

(7,1.4) -- (8.4,1.4)

(7.7,1.7) -- (9.1,1.7)

(8.4,1.4) -- (9.1,1.7)

(7,1.4) -- (7.7,1.7)

(7.7,0.35) -- (7.7,1.7);

\filldraw [draw=black!100,fill=black!00]

(8.4,0) circle (2pt)

(7.7,0.35) circle (2pt)

(7,1.4) circle(2pt)

(9.1,1.7) circle (2pt);

\draw[very thick]

(7,0) -- (7.7,1.7)

(7,0) -- (8.4,1.4)

(7,0) -- (9.1,0.35)

(7.7,1.7) -- (9.1,0.35)

(8.4,1.4) -- (9.1,0.35)

(8.4,1.4) -- (7.7,1.7);
\end{tikzpicture}

\caption{\small{Illustration of the unique $\mbox{SQS}(8)$, with three types of blocks: faces, opposite edges, and inscribed regular tetrahedra.}}\label{cube}
\end{figure}
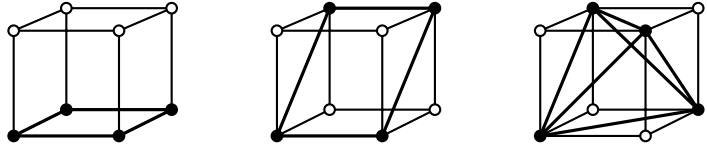
By a well-known result of Hanani, a necessary and sufficient condition for the existence of a $\mbox{SQS}(v)$ is that
$v \equiv 2$ or $4$ (mod $6$). For $v=8$ and $v=10$ there exists a \mbox{SQS$(v)$} in each case, unique up to isomorphism. These are
the affine space $AG(3,2)$ (cf.~Figure~\ref{cube}) and the M\"{o}bius plane of order $3$.
For $v=14$ there are exactly 4, and for $v=16$ exactly $1{,}054{,}163$ distinct isomorphism types.
Lenz~\cite{Lenz1985} proved that for admissible values $v$, the number $N(v)$ of non-isomorphic $\mbox{SQS}(v)$ grows exponentially, i.e.
\[\liminf_{v \rightarrow \infty} \frac{\log N(v)}{v^3}>0.\]

For a detailed treatment of combinatorial designs, we refer the reader to the encyclopedic accounts~\cite{BJL1999,crc06}.

\smallskip

We provide some combinatorial tools which will be helpful for the remainder of the paper.
For the existence of \mbox{$t$-designs}, the following basic necessary conditions can be obtained via elementary counting arguments (see,
for instance,~\cite{BJL1999}):

\begin{lem}\label{s-design}
Let $\mathcal{D}=(X,\mathcal{B},I)$ be a \mbox{$t$-$(v,k,\lambda)$}
design, and for a positive integer $s \leq t$, let $S \subseteq X$
with $\left|S\right|=s$. Then the number of blocks incident
with each element of $S$ is given by
\[\lambda_s = \lambda \frac{{v-s \choose t-s}}{{k-s \choose t-s}}.\]
In particular, for $t\geq 2$, a \mbox{$t$-$(v,k,\lambda)$} design is
also an \mbox{$s$-$(v,k,\lambda_s)$} design.
\end{lem}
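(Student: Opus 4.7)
The plan is to use a classical double-counting identity. Fix an arbitrary $s$-subset $S \subseteq X$ and let $\lambda_S$ denote, a priori with possible $S$-dependence, the number of blocks $B \in \mathcal{B}$ with $S \subseteq B$. I would count in two ways the set
\[
\mathcal{P}_S = \{(T,B) : S \subseteq T \subseteq B,\ |T| = t,\ B \in \mathcal{B}\}.
\]

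First, enumerating by $T$: every $t$-subset $T \supseteq S$ is obtained by adjoining a $(t-s)$-subset of the $v-s$ points outside $S$, giving $\binom{v-s}{t-s}$ possibilities, and each such $T$ lies in exactly $\lambda$ blocks by the defining $t$-design property. Hence $|\mathcal{P}_S| = \lambda \binom{v-s}{t-s}$. Second, enumerating by $B$: for each block $B \supseteq S$ the admissible $T$'s correspond bijectively to $(t-s)$-subsets of $B \setminus S$, of which there are $\binom{k-s}{t-s}$. Hence $|\mathcal{P}_S| = \lambda_S \binom{k-s}{t-s}$.

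Equating the two expressions and solving for $\lambda_S$ would yield the claimed formula, and crucially it shows that the value is independent of the chosen $s$-set $S$. I can then rename it $\lambda_s$ and conclude at once that every $s$-subset of $X$ is contained in the same number $\lambda_s$ of blocks, which is the second assertion that $\mathcal{D}$ is simultaneously an $s$-$(v,k,\lambda_s)$ design for each $s \leq t$. I expect no essential obstacle: the argument is purely combinatorial and the integrality of $\lambda_s$ follows automatically from the assumed existence of $\mathcal{D}$, so no divisibility side-conditions need to be verified separately.
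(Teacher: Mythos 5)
Your double-counting of the pairs $(T,B)$ with $S \subseteq T \subseteq B$ is correct and complete, and it establishes both the formula for $\lambda_s$ and its independence of the choice of $S$. The paper itself states this lemma without proof, citing it as a standard elementary counting argument from the design-theory literature, and your argument is exactly that standard argument, so there is nothing to reconcile.
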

\noindent It is customary to set $r:= \lambda_1$ denoting the
number of blocks incident with a given point.

\begin{lem}\label{Comb_t=5}
Let $\mathcal{D}=(X,\mathcal{B},I)$ be a \mbox{$t$-$(v,k,\lambda)$}
design. Then the following holds:
\begin{enumerate}

\item[{(a)}] $bk = vr.$

\smallskip

\item[{(b)}] $\displaystyle{{v \choose t} \lambda = b {k \choose t}.}$

\smallskip

\item[{(c)}] $r(k-1)=\lambda_2(v-1)$ for $t \geq 2$.

\end{enumerate}
\end{lem}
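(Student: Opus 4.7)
The plan is to prove all three identities by simple double-counting arguments, each applied to a suitable set of incidences; alternatively, parts (a) and (c) can be read off directly from Lemma~\ref{s-design}.

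For part (b), I would count in two ways the set of pairs $(T,B)$, where $T$ is a $t$-subset of $X$ and $B \in \mathcal{B}$ is a block incident with every point of $T$. On the one hand, there are $\binom{v}{t}$ choices for $T$, and by the defining regularity property each such $T$ lies in exactly $\lambda$ blocks, yielding $\binom{v}{t}\lambda$ pairs. On the other hand, each of the $b$ blocks has $k$ incident points and therefore contains exactly $\binom{k}{t}$ distinct $t$-subsets, giving $b\binom{k}{t}$ pairs. Equating the two counts establishes (b).

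For part (a), the same double-counting idea applied to flags works directly: the set $I$ of incident point-block pairs has size $vr$ when enumerated through points (each point lies in $r=\lambda_1$ blocks by the definition of $r$), and size $bk$ when enumerated through blocks. Alternatively, (a) is just Lemma~\ref{s-design} with $s=1$, rewritten using $b\binom{k}{1}=v\lambda_1 \binom{1}{0}$. For part (c), assuming $t\ge 2$ we may use Lemma~\ref{s-design} to conclude that $\mathcal{D}$ is also a $2$-$(v,k,\lambda_2)$ design, and then count pairs $(y,B)$ with $y\in X\setminus\{x\}$ and $x,y\in B$ for a fixed point $x$: there are $(v-1)\lambda_2$ such pairs counted via $y$, and $r(k-1)$ counted via $B$. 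Equivalently, one can substitute the explicit formulas
\[
r=\lambda_1=\lambda\,\frac{\binom{v-1}{t-1}}{\binom{k-1}{t-1}}, \qquad \lambda_2=\lambda\,\frac{\binom{v-2}{t-2}}{\binom{k-2}{t-2}}
\]
from Lemma~\ref{s-design} and verify the identity by a routine binomial manipulation.

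There is no real obstacle here: the statement collects standard consequences of the regularity of a $t$-design, and the only care needed is to choose the correct incidence set for each double count and, in (c), to invoke Lemma~\ref{s-design} to guarantee that $\lambda_2$ is well-defined when $t\ge 2$.
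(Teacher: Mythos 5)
Your proof is correct and is exactly the standard elementary double-counting argument the paper relies on: the paper states this lemma without proof, presenting it as a basic counting fact, and your counts of flags for (a), of $t$-subset--block incidences for (b), and of pairs $(y,B)$ through a fixed point $x$ for (c) are the intended arguments. One minor remark: your parenthetical alternative derivation of (a) from Lemma~\ref{s-design} with $s=1$ is circular as written (the displayed identity merely restates $bk=vr$ rather than deriving it), but this is harmless since your flag count already establishes (a).
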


\begin{lem}\label{divCond}
Let $\mathcal{D}=(X,\mathcal{B},I)$ be a \mbox{$t$-$(v,k,\lambda)$}
design. Then
\[\lambda {v-s \choose t-s} \equiv \, 0\; \emph{\bigg(mod}\;\, {k-s \choose t-s}\bigg)\]
for each positive integer $s \leq t$.
\end{lem}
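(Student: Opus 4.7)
The plan is to deduce this immediately from Lemma~\ref{s-design}. For each positive integer $s\leq t$ and each $s$-subset $S\subseteq X$, that lemma tells us that the number $\lambda_s$ of blocks incident with every point of $S$ is given by
\[
\lambda_s = \lambda\,\frac{\binom{v-s}{t-s}}{\binom{k-s}{t-s}}.
\]
Since $\lambda_s$ is the cardinality of a set of blocks, it must be a non-negative integer. Therefore $\binom{k-s}{t-s}$ divides $\lambda\binom{v-s}{t-s}$, which is exactly the claimed congruence.

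In short, no new combinatorial idea is required beyond what has already been set up: the divisibility conditions are simply the integrality conditions on the higher intersection numbers $\lambda_s$. The only thing worth spelling out in the write-up is that the lemma applies to \emph{every} choice of $s$ in the range $1\leq s\leq t$, so one obtains one divisibility relation for each such $s$ (the case $s=t$ reducing to $\lambda\mid\lambda$, and the case $s=1$ recovering the classical statement that $\binom{k-1}{t-1}$ divides $\lambda\binom{v-1}{t-1}$).

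There is no real obstacle here; the result is a direct corollary of Lemma~\ref{s-design} and is stated separately only because the divisibility form is the one that will be convenient to cite later in the paper when necessary conditions on admissible parameter triples $(v,k,\lambda)$ are invoked.
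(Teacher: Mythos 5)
Your proof is correct: the divisibility is precisely the integrality of the intersection numbers $\lambda_s$ from Lemma~\ref{s-design}, and since $s\leq t\leq k\leq v$ guarantees an $s$-subset of $X$ exists, $\lambda_s$ is indeed the cardinality of a set of blocks and hence a non-negative integer. The paper states this lemma without proof as one of the "basic necessary conditions obtained via elementary counting arguments," and your argument is exactly that standard derivation, so there is nothing to add.
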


A generalized version of \emph{Fisher's Inequality} for \mbox{$t$-designs} by Ray-Chaudhuri and Wilson~\cite[Thm.\,1]{Ray-ChWil1975}
gives lower bounds on the number of blocks:

\begin{thm}{\em (Ray-Chaudhuri and Wilson, 1975).}\label{RayCh}
Let $\mathcal{D}=(X,\mathcal{B},I)$ be a \mbox{$t$-$(v,k,\lambda)$}
design. If $t$ is even, say $t=2s$, and $v \geq k+s$, then $b \geq
{v\choose s}$. If $t$ is odd, say $t = 2s+1$, and $v-1 \geq k+s$,
then $b \geq 2{v-1\choose s}$.
\end{thm}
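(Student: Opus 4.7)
The plan is a standard linear-algebraic rank argument for the even case, followed by a reduction of the odd case via derived and residual designs.

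\textbf{Even case} ($t = 2s$, $v \geq k + s$). I form the $\binom{v}{s} \times b$ inclusion matrix $W = (W_{S,B})$ indexed by $s$-subsets $S \subseteq X$ and blocks $B \in \mathcal{B}$, with $W_{S,B} = 1$ iff $S \subseteq B$. Since $W$ has $b$ columns, the inequality $b \geq \binom{v}{s}$ follows once I show that $W$ has full row rank $\binom{v}{s}$, equivalently that the Gram matrix $M := WW^{T}$ is nonsingular. A direct count gives $M_{S,T} = |\{B \in \mathcal{B} : S \cup T \subseteq B\}|$, and because $|S \cup T| \leq 2s = t$, Lemma~\ref{s-design} identifies this entry with $\lambda_{|S \cup T|}$. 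Hence each entry of $M$ depends only on $i = |S \cap T|$, so $M$ lies in the Bose--Mesner algebra of the Johnson association scheme $J(v,s)$ and has the explicit form $M = \sum_{j=0}^{s} \lambda_{s+j} A_{j}$, where $A_j$ is the $j$-th Johnson adjacency matrix.

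The nonsingularity of $M$ is then established by computing its spectrum on the common eigenspaces of $J(v,s)$. Each eigenvalue $\mu_m$ ($0 \leq m \leq s$) reduces, after substituting the closed form $\lambda_i = \lambda \binom{v-i}{t-i}/\binom{k-i}{t-i}$ from Lemma~\ref{s-design} and applying standard Vandermonde/hypergeometric identities, to a positive rational multiple of a binomial coefficient of the form $\binom{v-s-m}{k-s}$. The hypothesis $v \geq k + s$ ensures $v - s - m \geq k - s \geq 0$ for every $m \leq s$, so every eigenvalue is strictly positive and $M$ is invertible.

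\textbf{Odd case} ($t = 2s + 1$, $v - 1 \geq k + s$). Fix any point $x \in X$ and partition the blocks of $\mathcal{D}$ into the $r = \lambda_{1}$ blocks containing $x$ and the $b - r$ blocks avoiding $x$. On the reduced point set $X \setminus \{x\}$ this produces the \emph{derived} design at $x$, with blocks $B \setminus \{x\}$ for $x \in B$, which is a $2s$-$(v-1, k-1, \lambda)$ design on $r$ blocks, and the \emph{residual} design at $x$, with blocks $B$ for $x \notin B$, which is a $2s$-$(v-1, k, \lambda_{2s} - \lambda)$ design on $b - r$ blocks. Both are $2s$-designs, and the hypothesis $v - 1 \geq k + s$ supplies precisely the Fisher condition from the even case for each: $(v-1) \geq (k-1) + s$ for the derived design and $(v-1) \geq k + s$ for the residual design. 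Applying the even case to both yields $r \geq \binom{v-1}{s}$ and $b - r \geq \binom{v-1}{s}$, hence $b \geq 2\binom{v-1}{s}$.

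\textbf{Main obstacle.} The technical heart of the argument is the positivity of the eigenvalues of $M$ in the even case. Extracting a manageable closed form for $\mu_m$ requires exploiting the $t$-design regularity through Lemma~\ref{s-design} to absorb the sum of Eberlein-type evaluations into a single positive binomial factor, and it is exactly this step in which the hypothesis $v \geq k + s$ is used. Once the even case is in hand, the odd-case reduction is purely formal.
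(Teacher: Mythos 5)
The paper does not actually prove this statement: Theorem~\ref{RayCh} is quoted verbatim from Ray-Chaudhuri and Wilson (cited as Theorem~1 of their 1975 paper) and is used later only as a black box (to get $b \geq v$ in the proof of Theorem~\ref{thm_recon}). So there is no in-paper argument to compare against; what you have written is, in outline, the classical proof of the generalized Fisher inequality itself. Your structure is sound: the Gram matrix $M = WW^{T}$ of the $s$-subset versus block inclusion matrix does lie in the Bose--Mesner algebra of $J(v,s)$ with $M = \sum_{j=0}^{s}\lambda_{s+j}A_j$, full row rank of $W$ does give $b \geq \binom{v}{s}$, and the odd-case reduction is complete and correct as stated --- the derived design at $x$ is a $2s$-$(v-1,k-1,\lambda)$ design with $r$ blocks, the residual is a $2s$-$(v-1,k,\lambda_{2s}-\lambda)$ design with $b-r$ blocks (note $\lambda_{2s} > \lambda$ since $v > k$ under your hypothesis, so the residual is nonempty), and the inequalities $v-1 \geq (k-1)+s$ and $v-1 \geq k+s$ are exactly what the even case needs.

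The one place where your argument is not yet a proof is the step you yourself flag as the main obstacle: the positivity of the eigenvalues $\mu_m$ of $M$ is asserted (``reduces, after \dots standard Vandermonde/hypergeometric identities, to a positive rational multiple of $\binom{v-s-m}{k-s}$'') rather than derived. In the even case this computation \emph{is} the theorem; everything else is bookkeeping. You should either carry out the Eberlein-eigenvalue calculation explicitly, or substitute the original Ray-Chaudhuri--Wilson device, which avoids eigenvalues altogether: one writes $M$ as a nonnegative combination $\sum_{i=0}^{s} c_i\, W_{i,s}^{T}W_{i,s}$ of positive semidefinite matrices in which the coefficient $c_s$ of $W_{s,s}^{T}W_{s,s} = I$ is strictly positive precisely when $v \geq k+s$, whence $M$ is positive definite. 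Either route closes the gap; as written, the decisive inequality is still an IOU.
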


\subsubsection*{\bf Line Graphs}

For an incidence structure \mbox{$\mathcal{S}=(X,\mathcal{B},I)$}, the \emph{line graph} $G(\mathcal{S})$ of $\mathcal{S}$
has as set of vertices the set $\mathcal{B}$ of blocks, whereas any two vertices are adjacent if and only if their corresponding blocks are incident with at least one common point.
Line graphs of incidence structures are sometimes alternatively called \emph{block graphs} or \emph{block intersection graphs} (or \emph{Steiner graphs} in the case of Steiner \mbox{$t$-designs}).
As an example, we consider a Steiner \mbox{$2$-$(7,3,1)$ design}, the well-known \emph{Fano plane}, which is the smallest design arising from a finite
projective geometry. Since any two of its seven blocks have a point in common, its line graph is isomorphic to the complete graph $K_7$ (see Figure~\ref{fano}). We note that a line graph of a Steiner \mbox{$2$-design} is a \emph{strongly regular graph}, i.e. each pair of adjacent vertices has the same number of common neighbors, and each pair of non-adjacent vertices has the same number of common neighbors.

\newcounter{step}
\newcommand{\completeGraph}[1]{
  \setcounter{step}{360/#1}
   \begin{tikzpicture}[scale=0.68,thick,rotate=(\thestep-174)/2,inner sep = 0pt,outer sep = 0pt]  \hspace*{-2.3cm}
    \draw
    (0,0) node (0) {};
    \tikzstyle{every node}=[draw,shape=circle,inner sep=1.3pt,fill=black!100]
    \draw
    \foreach \r in {1,...,#1}
    {
      (\r*\thestep:2) node (\r) {} -- (\r*\thestep:2)
    };
    \foreach \i in {2,...,#1}
    {
      \foreach \j in {1,...,\i}
      {
        \draw
        (\i) -- (\j);
      };
    };
  \end{tikzpicture}
}

\begin{figure}[htp]

\centering

\hskip 2cm

\begin{flushleft}

\begin{tikzpicture}[scale=1.0,thick] \hspace*{2.5cm}

\filldraw [draw=black!100,fill=black!0]

(0,0.87) circle (24.4pt);

\filldraw [draw=black!100,fill=black!100]

(0,0) circle (2pt)  (0,-0.3) node {}

(-1.5,0) circle (2pt) (-1.52,-0.3) node {}

(1.5,0) circle (2pt) (1.52,-0.3) node {}

(0,0.87) circle (2pt) (0.12,1.15) node {}

(0,2.6) circle(2pt) (0,2.9) node {}

(0.75,1.31) circle (2pt) (0.95,1.35) node {}

(-0.75,1.31) circle(2pt) (-0.95,1.35) node {};

\draw (1.5,0) -- (-1.5,0)

(1.5,0) -- (0,2.6)

(-1.5,0) -- (0,2.6)

(-1.5,0) -- (0.75,1.31)

(1.5,0) -- (-0.75,1.31)

(0,0) -- (0,2.6);

\hspace*{-3cm}

\end{tikzpicture}

\end{flushleft}

\vspace*{-3.3cm}

\vskip -2cm

\begin{flushright}

\completeGraph{7}

\end{flushright}

%\vspace*{-1.2cm}
\caption{\small{The Fano plane $PG(2,2)$, and its line graph $K_7$.}}\label{fano}
\end{figure}
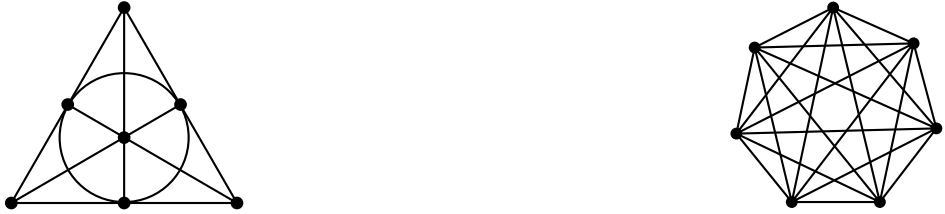

%This works perfect with WSI
%\newcounter{step}
%\newcommand{\completeGraph}[1]{
%  \setcounter{step}{360/#1}
%   \begin{tikzpicture}[scale=0.68,thick,rotate=(\thestep-174)/2,inner sep = 0pt,outer sep = 0pt]  \hspace*{-2.3cm}
%    \draw
%    (0,0) node (0) {};
%    \tikzstyle{every node}=[draw,shape=circle,inner sep=1.3pt,fill=black!100]
%    \draw
%    \foreach \r in {1,...,#1}
%    {
%      (\r*\thestep:2) node (\r) {} -- (\r*\thestep:2)
%    };
%    \foreach \i in {2,...,#1}
%    {
%      \foreach \j in {1,...,\i}
%      {
%        \draw
%        (\i) -- (\j);
%      };
%    };
%  \end{tikzpicture}
%}
%
%\begin{figure}[htp]
%
%\centering
%
%\begin{flushleft}
%
%
%
%\begin{tikzpicture}[scale=1.0,thick] \hspace*{2.5cm}
%
%\filldraw [draw=black!100,fill=black!0]
%
%(0,0.87) circle (24.4pt);
%
%\filldraw [draw=black!100,fill=black!100]
%
%(0,0) circle (2pt)  (0,-0.3) node {}
%
%(-1.5,0) circle (2pt) (-1.52,-0.3) node {}
%
%(1.5,0) circle (2pt) (1.52,-0.3) node {}
%
%(0,0.87) circle (2pt) (0.12,1.15) node {}
%
%(0,2.6) circle(2pt) (0,2.9) node {}
%
%(0.75,1.31) circle (2pt) (0.95,1.35) node {}
%
%(-0.75,1.31) circle(2pt) (-0.95,1.35) node {};
%
%\draw (1.5,0) -- (-1.5,0)
%
%(1.5,0) -- (0,2.6)
%
%(-1.5,0) -- (0,2.6)
%
%(-1.5,0) -- (0.75,1.31)
%
%(1.5,0) -- (-0.75,1.31)
%
%(0,0) -- (0,2.6);
%
%\hspace*{-3cm}
%
%\end{tikzpicture}
%
%\end{flushleft}
%
%
%
%
%\vspace*{-3.3cm}
%
%\begin{flushright}
%
%\completeGraph{7}
%
%\end{flushright}
%
%
%%\vspace*{-1.2cm}
%\caption{\small{The Fano plane $PG(2,2)$, and its line graph $K_7$.}}\label{fano}
%\end{figure}

\subsubsection*{\bf Some Further Notation}

An incidence structure $\mathcal{S}_1=(X_1,\mathcal{B}_1,I_1)$ is called a \emph{substructure} of an incidence structure $\mathcal{S}=(X,\mathcal{B},I)$, if the following holds:
\begin{enumerate}
\item[(i)] $X_1 \subseteq X$ and $\mathcal{B}_1 \subseteq \mathcal{B}$,

\smallskip

\item[(ii)] for all $x \in X_1$ and all $B \in \mathcal{B}_1$, we have
\[(x,B) \in I_1  \Longleftrightarrow (x,B) \in I.\]
\end{enumerate}
A \emph{subdesign} of a \mbox{$t$-$(v,k,\lambda)$} design is a substructure of the incidence structure which itself is a
\mbox{$t$-$(w,k,\lambda)$} design. The subdesign is \emph{proper} if $w < v$.

A \emph{composition series} for a finite group $G$ is a chain of normal subgroups of the form
\[1=G^m \lhd \cdots \lhd G^2 \lhd G^1 \lhd G^0=G,\]
in which the quotients $G^i / G^{i+1}$ are simple groups. The factor groups are the \emph{composition factors} of $G$. They are independent of the choice of composition series by the Jordan--H\"{o}lder theorem. The \emph{composition width} of $G$, denoted by $\mbox{cw}(G)$, is defined to be the smallest positive integer $n$ such that every non-Abelian composition factor of $G$ embeds in the symmetric group $S_n$.

Throughout this paper, logarithms are taken base $2$. All other notation is standard.

%%% ----------------------------------------------------------------------

\section{Isomorphism Testing of Designs}\label{isotest}

A standard algorithmic approach for testing isomorphism of graphs is to try to assign to each graph a \emph{canonical label} (\emph{canonical form}), so that two graphs are isomorphic if and only if the have the same label. For instance, one could start out by labeling the vertices by their degrees, and then \emph{refine} this labeling by further distinguishing equal labels through other local properties of the vertices. If, after refinement, it is possible to endow a unique label to every vertex, then a canonical label for the graph has been found. This procedure with its numerous variations has provided good algorithms for a variety of special classes of graphs. On the other hand, obstacles may occur if the graphs exhibit a high degree of regularity or symmetry, e.g. for regular graphs or graphs associated with highly regular combinatorial structures. In some cases it is possible to break up the symmetry by \emph{individualizing} particular vertices before endowing them with unique labels. For further details on the different methods used for canonical labeling, we refer to~\cite{BL1983,ReCo1977,Weis1976} and~\cite[Sect.\,2]{CaiFuer1992}.

\smallskip

Particularly important for our purposes, Miller~\cite{Mil1978} showed that a canonical labeling can be found in $O (v^{\log v + O(1)})$ time for Steiner triple systems. His proof relies on the fact that a Steiner triple system can be represented as a quasigroup, and hence has a set of at most $1 + \log v$ generators.
By individualizing these, it is then possible to order in polynomial-time the remaining vertices in a canonical way.
Babai and Luks~\cite{BL1983} extended this approach by an algebraization of the problem which involves information about the groups of automorphisms. Applied to \mbox{$2$-designs}, they obtained the subsequent result.

\begin{thm}{\em (Babai and Luks, 1983).}\label{CF_2des}
Canonical forms (and hence isomorphism testing) for non-trivial \mbox{$2$-$(v,k,\lambda)$} designs can be computed in $O(v^{\log v + f(k,\lambda)})$ time. In particular, the time bound is $O(v^{\log v + O(1)})$ for bounded parameters $k,\lambda$.
\end{thm}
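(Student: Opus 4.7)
The plan is to combine a combinatorial individualization step with Luks' polynomial-time string-isomorphism machinery, paying a factor of $v^{\log v}$ for the former and a factor $\mbox{poly}(v)\cdot g(k,\lambda)$ for the latter. Since a non-trivial \mbox{$2$-$(v,k,\lambda)$} design $\mathcal{D}=(X,\mathcal{B},I)$ satisfies $b={v \choose 2}\lambda/{k \choose 2}=O(v^2)$ by Lemma~\ref{Comb_t=5}(b) for bounded $k,\lambda$, the whole incidence structure fits into a bipartite graph of total size $O(v^2)$, and a canonical ordering of the point set $X$ induces a canonical labeling of $\mathcal{D}$ by sorting each block's incidence vector. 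So it suffices to canonically order $X$.

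First I would show that $\mbox{Aut}(\mathcal{D})$ acts on $X$ with a \emph{base} of size $O(\log v)$, meaning an ordered sequence of points whose pointwise stabilizer is trivial. The idea is an individualization/refinement scheme in the spirit of Weisfeiler--Leman: after fixing a point $p$, each remaining point $x$ is recolored by the multiset of intersection sizes between the $r$ blocks through $x$ and the $r$ blocks through $p$; the $\lambda_s$-regularity of Lemma~\ref{s-design}, together with the lower bound $b\geq v$ coming from Fisher's inequality (Theorem~\ref{RayCh} with $t=2s=2$), ensures that this refinement genuinely splits orbits and that the order of the residual stabilizer drops by a constant factor. Iterating $O(\log v)$ times produces the desired short base.

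Next, I would enumerate all $v^{\log v}$ ordered sequences of $\lfloor\log v\rfloor$ points of $X$, and for each such candidate base individualize the chosen points with distinct colors. After individualization, the stabilizer in $\mbox{Aut}(\mathcal{D})$ is contained in a permutation group whose non-Abelian composition factors all embed in a symmetric group of rank bounded by some $f(k,\lambda)$, i.e.\ a group of bounded composition width in the sense introduced in Section~\ref{Prelim}. Applying Luks' algorithm for string isomorphism under such a group to the colored incidence structure yields a canonical form in time $\mbox{poly}(v)\cdot g(k,\lambda)$. Returning the lexicographically smallest output over all base choices produces a canonical form of $\mathcal{D}$ within the claimed bound $O(v^{\log v+f(k,\lambda)})$.

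The anticipated main obstacle is the bound on the composition width of the pointwise stabilizer after individualization. It is plausible from the rigidity of a \mbox{$2$-design} (any two points lie in exactly $\lambda$ blocks, blocks have only $k$ points, and the replication number $r$ is determined), but turning this into a clean uniform bound requires the Classification of Finite Simple Groups together with careful handling of the Luks normal-subgroup tower, so that only simple non-Abelian factors acting on $O_{k,\lambda}(1)$ points survive the descent. Once this structural lemma is in place, combining it with the base-of-size-$\log v$ estimate gives the stated running time, and the bound is uniformly $O(v^{\log v+O(1)})$ whenever $k$ and $\lambda$ are bounded.
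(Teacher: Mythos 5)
Your high-level architecture --- individualize $O(\log v)$ points at a cost of $v^{\log v}$, then exploit a bounded composition width to finish in polynomial time per candidate --- is the right shape, and it matches the Babai--Luks argument that the paper cites (and sketches when extending it in Theorem~\ref{CF}). But the load-bearing steps are either wrong or explicitly deferred. First, your justification for why $\log v$ individualized points suffice is not the right one and is not established. You claim $\mbox{Aut}(\mathcal{D})$ has a \emph{base} of size $O(\log v)$ (trivial pointwise stabilizer), argued via a Weisfeiler--Leman-style refinement in which ``the order of the residual stabilizer drops by a constant factor''; nothing in the proposal proves this, and for $\lambda>1$ it is the wrong target: after fixing any number of points, the $\lambda$ blocks through a fixed pair can still be permuted, so the pointwise stabilizer is in general nontrivial --- the entire point of the composition-width machinery is to handle a \emph{nontrivial} residual group whose composition factors are small. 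The correct combinatorial input is the generating-set argument the paper records just before the theorem: subdesigns are closed under intersection, so every point set generates a subdesign, and a proper \mbox{$2$-$(w,k,\lambda)$} subdesign of a \mbox{$2$-$(v,k,\lambda)$} design satisfies $v\geq(k-1)w+1\geq 2w+1$; hence greedily adding a point outside the currently generated subdesign at least doubles it, and a generating set $S$ with $|S|\leq 1+\log v$ exists. That is what licenses the $v^{\log v}$ enumeration.

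Second, you identify the composition-width bound for the stabilizer after individualization as ``the anticipated main obstacle'' and defer it to the Classification of Finite Simple Groups. This leaves the proof incomplete precisely where the theorem's content lies, and the attribution is off: the bound comes from the combinatorial closure structure, not from CFSG. Along the chain $Y_1\subseteq Y_2\subseteq\cdots$ obtained by repeatedly adjoining all blocks meeting the current set in at least $2$ points, the stabilizer of $S$ acts at each stage on the $\lambda$ blocks through an already-fixed pair and on the at most $k-2$ new points of each such block, so every composition factor embeds in $S_{\max(\lambda,k-2)}$. Finally, you propose to ``apply Luks' algorithm for string isomorphism'' and take the lexicographically least output over all candidate bases; Luks' procedure decides isomorphism (computes a coset), it is not a canonization, and lex-minimizing over individualizations yields a canonical form only if the inner routine is itself isomorphism-invariant. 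Supplying that --- canonical forms for bipartite graphs relative to a group of bounded composition width acting on one side, applied along the nested graphs $\{H_j\}$ --- is exactly the contribution of Babai--Luks that your proposal uses as a black box without establishing.
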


A crucial observation in the Babai--Luks approach is the following well-known fact (see, e.g.,~\cite[Ch.\,II.1]{crc06}):
If there is a \mbox{$2$-$(v,k,\lambda)$} design containing a proper \mbox{$2$-$(w,k,\lambda)$} subdesign, then $v \geq (k-1)w + 1$.
As the set of all subdesigns is closed under intersection, any subset `generates' a subdesign.
In order to extend Theorem~\ref{CF_2des} to \mbox{$t$-designs}, we need a recent result by Kreher and Rees~\cite{KR2001}.

\begin{thm}{\em (Kreher and Rees, 2001).}\label{kreher}
Suppose $\mathcal{D}$ is a non-trivial \mbox{$t$-$(v,k,\lambda)$} design with $t \geq 2$ containing a proper \mbox{$t$-$(w,k,\lambda)$} subdesign. Then $v \geq 2w$ when $t$ is odd, while $v \geq 2w+1$ when $t$ is even.
\end{thm}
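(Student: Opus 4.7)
My plan is a double-counting argument that classifies the blocks of $\mathcal{D}$ outside $\mathcal{D}'$ by the size of their intersection with the subdesign's point set, followed by extracting the bound from the non-negativity of the resulting counts.

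Setup and key observation. Let $X'\subsetneq X$ denote the point set of $\mathcal{D}'$ with $|X'|=w$. Every block $B\in\mathcal{B}\setminus\mathcal{B}'$ must satisfy $|B\cap X'|\le t-1$. Indeed, if $|B\cap X'|\ge t$, pick any $t$-subset $S\subseteq B\cap X'$; by Lemma~\ref{s-design} applied to both $\mathcal{D}$ and $\mathcal{D}'$ (with the common parameter $\lambda$), $S$ is contained in exactly $\lambda$ blocks of each, and since $\mathcal{B}'\subseteq\mathcal{B}$ these $\lambda$-block sets coincide, forcing $B\in\mathcal{B}'$, a contradiction.

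Counting and linear system. For $0\le j\le t-1$, set $\beta_j:=\#\{B\in\mathcal{B}\setminus\mathcal{B}':\ |B\cap X'|=j\}$. Double-counting the pairs $(S,B)$ in which $S$ is a $t$-subset with $|S\cap X'|=s$ and $S\subseteq B\in\mathcal{B}\setminus\mathcal{B}'$ — and noting that such $S$ (with $s<t$, so $S\not\subseteq X'$) lies in $\lambda$ blocks of $\mathcal{D}$ and in none of $\mathcal{D}'$ — gives, for every $s=0,1,\ldots,t-1$, the triangular linear system
\[
\sum_{j=s}^{t-1}\binom{j}{s}\binom{k-j}{t-s}\,\beta_j \;=\; \lambda\binom{w}{s}\binom{v-w}{t-s}.
\]
Back-substitution (starting with $\beta_{t-1}=\lambda\binom{w}{t-1}(v-w)/(k-t+1)$ at $s=t-1$) uniquely determines each $\beta_j$ as an explicit rational function of $v,w,k,\lambda,t$.

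Extracting the bound, and the main obstacle. Since each $\beta_j$ counts blocks, $\beta_j\ge 0$ for all $j$. For $t=2$ (the base case), $\beta_0\ge 0$ recovers the classical Fisher-type inequality $v\ge w(k-1)+1$, which — since non-triviality forces $k\ge 3$ — implies $v\ge 2w+1$. For odd $t\ge 3$, the derived design $\mathcal{D}_x$ at any $x\in X'$ is a $(t-1)$-$(v-1,k-1,\lambda)$ design containing $\mathcal{D}'_x$ as a $(t-1)$-$(w-1,k-1,\lambda)$ subdesign; the even statement at level $t-1$ then yields $v-1\ge 2(w-1)+1$, i.e.\ $v\ge 2w$. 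For even $t\ge 4$, however, derivation can only transfer the problem to the odd $(t-1)$ case, which produces $v\ge 2w-1$ — short by two. Hence the even case has to be settled by a direct, parity-sensitive analysis of the explicit $\beta_j$ formulas, combining the constraint $\beta_0\ge 0$ with the remaining equations to expose the extra $+2$ that distinguishes even $t$. Pinpointing this cancellation and verifying that the resulting inequality is independent of $k$ and $\lambda$ is the technical heart of the argument; the intersection-size observation, the setting up and solving of the triangular system, and the odd-case induction are routine by comparison.
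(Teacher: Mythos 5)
The paper offers no proof of this statement at all --- it is imported verbatim from Kreher and Rees \cite{KR2001} --- so there is no internal argument to compare against; your proposal has to stand on its own, and it does not. The sound parts first: the observation that every block $B\in\mathcal{B}\setminus\mathcal{B}'$ meets $X'$ in at most $t-1$ points is correct (blocks of the subdesign lie entirely inside $X'$, and a $t$-subset of $B\cap X'$ already has its full quota of $\lambda$ blocks inside $\mathcal{B}'$); the triangular system for the $\beta_j$ is correctly derived; the base case $t=2$ does give $v\ge(k-1)w+1\ge 2w+1$ since non-triviality forces $k\ge3$; and the derived-design reduction from odd $t$ to even $t-1$ is legitimate.

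The gap is the one you flag yourself: the even case $t\ge4$ is never established, and it is not a finishing detail --- it is the theorem. Your chain of reductions bottoms out only at $t=2$, so the argument as written covers exactly $t=2$ and $t=3$; for even $t\ge4$ you defer to an unexecuted ``parity-sensitive analysis of the explicit $\beta_j$ formulas,'' and since odd $t\ge5$ reduces to even $t-1\ge4$, everything beyond $t=3$ collapses with it. Nor is it evident that non-negativity of the $\beta_j$ can deliver the missing $+2$: at $t=2$ the constraint $\beta_0\ge0$ happens to produce a $k$-dependent bound that dominates $2w+1$, but you have not exhibited any analogous cancellation for a single larger even $t$ (even $t=4$ is not worked out), and the fact that the target bound is independent of $k$ and $\lambda$ while the $\beta_j$ are not is at least a warning sign that these constraints alone may not suffice. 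To make this a proof you must either carry out that analysis explicitly or replace it with the actual argument of \cite{KR2001}.
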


We can now prove the following result.

\begin{thm}\label{CF}
Canonical forms (and hence isomorphism testing) for non-trivial \mbox{$t$-$(v,k,\lambda)$} designs with $t \geq 2$ can be computed in $O(v^{\log v + f(t,k,\lambda)})$ time. In particular, the time bound is $O(v^{\log v + O(1)})$ for bounded parameters $t,k,\lambda$.
\end{thm}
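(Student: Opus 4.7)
The plan is to transplant the Babai--Luks proof of Theorem~\ref{CF_2des} from $2$-designs to $t$-designs, substituting the Kreher--Rees bound of Theorem~\ref{kreher} for the classical subdesign inequality $v \geq (k-1)w+1$ used there. The first step is to note that the point-sets of subdesigns of $\mathcal{D}$ form a closure system: if $X_1, X_2 \subseteq X$ each carry a subdesign, then every $t$-subset $T \subseteq X_1 \cap X_2$ has its $\lambda$ blocks contained in both $X_1$ and $X_2$, and hence in $X_1 \cap X_2$. This yields a well-defined closure operator $S \mapsto \langle S \rangle$ on subsets of $X$.

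The crucial quantitative input is the logarithmic height of the resulting subdesign lattice. Applied to any non-trivial subdesign $\mathcal{D}''$ that properly contains a subdesign $\mathcal{D}'$, Theorem~\ref{kreher} yields $|\mathcal{D}''| \geq 2|\mathcal{D}'|$. Iterating along a strict chain $\mathcal{D}_0 \subsetneq \mathcal{D}_1 \subsetneq \cdots \subsetneq \mathcal{D}_m = \mathcal{D}$ of non-trivial subdesigns, the cardinalities at least double at each step, forcing $m \leq \log v + O(1)$ (the $O(1)$ absorbing a bounded initial segment, depending only on $t,k,\lambda$, needed to grow the closure past $k$ points into the non-trivial regime where Kreher--Rees applies). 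Greedily picking points outside the current closure then produces a generating set $T \subseteq X$ of size $\leq \log v + O(1)$ with $\langle T \rangle = \mathcal{D}$. One finally runs the Babai--Luks individualization: enumerate all $v^{\log v + O(1)}$ ordered tuples of distinct points of this length, interpret each tuple as a pointwise individualization, compute its subdesign closure, and invoke the polynomial-time group-theoretic canonical-labeling subroutine of~\cite{BL1983} on the refined structure. Returning the lexicographically smallest label yields a canonical form for $\mathcal{D}$ in $v^{\log v + f(t,k,\lambda)}$ time.

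The combinatorial doubling is essentially handed to us by Theorem~\ref{kreher}, so the main obstacle I anticipate is not the enumeration but verifying that the polynomial-time inner routine of Babai--Luks genuinely extends from $2$-designs to $t$-designs. Concretely, after individualizing the generating set one must control, in terms of $t,k,\lambda$ alone, the composition width of the pointwise stabilizer in $\mathrm{Aut}(\mathcal{D})$, so that Luks' bounded-degree/bounded-colour-class machinery from~\cite{Luks1982} applies. For bounded parameters this should reduce to the same analysis as in~\cite{BL1983} via the $k$-uniform block structure, yielding $f(t,k,\lambda) = O(1)$ and hence the advertised $O(v^{\log v + O(1)})$ running time.
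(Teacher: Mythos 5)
Your proposal follows essentially the same route as the paper: use the Kreher--Rees bound (Theorem~\ref{kreher}) to show the intersection-closed family of subdesigns has logarithmic chain length, extract a generating set of size $\log v + O(1)$, individualize it, and hand the rest to the Babai--Luks machinery of Theorem~\ref{CF_2des}. The one step you explicitly defer --- bounding the composition width of the stabilizer so that Luks' techniques apply --- is exactly where the paper supplies the concrete content, namely $\mathrm{cw}(\mathrm{Aut}_S(\mathcal{D})) \leq \max(\lambda,k-t)$ for fixed $t$ together with a nested sequence of bipartite graphs whose degree on the side of the group action is bounded by $k-t$, yielding the stated running time.
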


\begin{proof}
Let $\mathcal{D}=(X,\mathcal{B},I)$ be a non-trivial \mbox{$t$-$(v,k,\lambda)$} design with $t \geq 2$. In view of Theorem~\ref{kreher}, we establish the key observation
\begin{enumerate}
\item[(1)] $\mathcal{D}$ has a generating set $S$ of size at most $1+ \log v$.
\end{enumerate}
By individualizing $S$, we may proceed for the remainder of the proof by straightforwardly adapting the method of proof used for Theorem~\ref{CF_2des} (cf.~\cite[Thm.\,4.6]{BL1983}). We note that this method relies on results of Luks~\cite{Luks1982}. In what follows, we describe the basic steps. We first obtain
\begin{enumerate}
\item[(2)] For fixed $t$, the composition factors of the setwise stabilizer $\mbox{Aut}_S(\mathcal{D})$ are subgroups of $S_n$, where $n=\max(\lambda,k-t)$. In particular, the composition width $\mbox{cw}(\mbox{Aut}_S(\mathcal{D}))$ is at most $n$.
\end{enumerate}
This is then employed in an inductive procedure for finding canonical forms through a nested sequence of graphs. We indicate the underlying construction for the nested graphs. For a sequence $S=(u_1, \ldots, u_s)$, a chain $\{Y_i\}_i$ of subsets of $X$ is constructed as follows: $Y_1=\{u_1\}$ and while
$Y_i \neq X$, if $Y_i$ induces a subdesign then $Y_{i+1}=Y_i \, \cup \, \{\mbox{first } u_j \mbox{ not in } Y_i\}$ else $Y_{i+1}=Y_i \, \cup \, \{B \in \mathcal{B} : \left| B \cap Y_i \right| \geq t\}$. The nested graphs $\{H_j\}_j$ are defined as bipartite graphs, $H_{2i-1}$ and $H_{2i}$, both having the set $Y_i$ on one side and on the other side the vertices representing those blocks entirely in $Y_i$ (for $H_{2i-1}$) or those in $Y_{i+1}$ (for $H_{2i}$), and edges correspond to flags.
The procedure invokes as a subroutine an algorithm of Babai and Luks (described in detail in~\cite[Sect.\,4.2]{BL1983}) for finding canonical forms for a bipartite graph with respect to a group action on one of its sides, the complexity of which is sensitive to the maximum degree on that side and to the composition width of the group.
With respect to the given construction of the nested sequence, it can be shown (again via applying techniques of Luks~\cite{Luks1982}) that the maximum degree on the side of group action is bounded by $k-t$. We therefore obtain
\begin{enumerate}
\item[(3)] For fixed $t$, the total running time is $O(v^{\log v + \omega(\max(\lambda,k-t))+O(1)})$.
\end{enumerate}
This establishes the claim.
\end{proof}

%%%-----------------------------------------------------------------------

\section{Reconstruction of Designs from Line Graphs}\label{Stgraph}

If we now give an efficient method of reconstructing a \mbox{$t$-design} from its line graph, then isomorphism of line graphs of \mbox{$t$-$(v,k,\lambda)$} designs can be tested in $O(v^{\log v + O(1)})$ time for bounded $t,k,\lambda$.
To accomplish this task, we utilize an extension of the well-known Erd{\H{o}}s--Ko--Rado theorem to \mbox{$t$-designs}, which has been obtained by Rands~\cite{Rands1982}.

\begin{thm}\label{rands}
Let $\mathcal{D}=(X,\mathcal{B},I)$ be a \mbox{$t$-$(v,k,\lambda)$} design.
Given $0<s<t \leq k$, then there exists a function $f(k,t,s)$ with the following property: suppose there is a subset $\mathcal{A} \subseteq \mathcal{B}$ of blocks such that $\left| A \cap B \right| \geq s$ for all $A,B \in \mathcal{A}$, then if $v \geq f(k,t,s)$, it follows that
\[\left| \mathcal{A} \right| \leq \lambda_s \; \mbox{(with $\lambda_s$ as in Lemma~\ref{s-design})},\]
and the only families of blocks reaching this bound are those consisting of all blocks incident with an \mbox{$s$-subset} of $X$.

\smallskip

Furthermore, the function $f$ can be estimated as follows:
\[f(k,t,s) \leq \left\{\begin{array}{ll}
    s + {\displaystyle{k \choose s}}(k-s+1)(k-s) &\mbox{if} \;\, s < t-1\\
    s+ (k-s) {\displaystyle{k \choose s}^2} &\mbox{if} \;\, s= t-1.\\
\end{array} \right.\]
\end{thm}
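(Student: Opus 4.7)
The plan is to adapt the Hilton--Milner sharpening of the Erd{\H{o}}s--Ko--Rado theorem to the design setting via a dichotomy on whether the family $\mathcal{A}$ admits a common ``core'' $s$-subset.

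\textbf{Core case.} Suppose some $s$-subset $T \subseteq X$ is contained in every block of $\mathcal{A}$. Then $\mathcal{A}$ is a subfamily of the star $\{B \in \mathcal{B} : T \subseteq B\}$, whose size equals $\lambda_s$ by Lemma~\ref{s-design}. Hence $|\mathcal{A}| \leq \lambda_s$, with equality exactly when $\mathcal{A}$ is the full star through some $s$-subset. This disposes of both the inequality and the characterisation of the extremal families in one stroke, and is valid without any restriction on $v$.

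\textbf{No-core case.} The real work is to show that if no $s$-subset is contained in every member of $\mathcal{A}$, then the strict inequality $|\mathcal{A}| < \lambda_s$ holds once $v \geq f(k,t,s)$. Fix a block $A_0 \in \mathcal{A}$. Every other $B \in \mathcal{A}$ meets $A_0$ in at least $s$ points, hence contains at least one of the $\binom{k}{s}$ $s$-subsets of $A_0$; let $\mathcal{A}(T) = \{B \in \mathcal{A} : T \subseteq B\}$, so $|\mathcal{A}| \leq \sum_{T \subseteq A_0,\, |T|=s} |\mathcal{A}(T)|$. By the no-core hypothesis, for each such $T$ that actually occurs there exists a witness $C_T \in \mathcal{A}$ with $T \not\subseteq C_T$, and any $B \in \mathcal{A}(T)$ must still satisfy $|B \cap C_T| \geq s$. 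Because $T \not\subseteq C_T$, the intersection $B \cap C_T$ supplies an $s$-subset $S' \subseteq C_T$ with $S' \not\subseteq T$, and hence $|T \cup S'| \geq s+1$; classifying $B$ by this $S'$ gives $|\mathcal{A}(T)| \leq \sum_{S'} \lambda_{|T \cup S'|}$, where the sum runs over the at most $\binom{k}{s}$ admissible $s$-subsets of $C_T$.

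\textbf{Quantitative threshold.} To turn this into the stated bound I use the identity $\lambda_j = \lambda_{j+1}(v-j)/(k-j)$ from Lemma~\ref{s-design} recursively, so that each $\lambda_{|T \cup S'|}$ is a factor $(v-s)/(k-s)$ (or more) smaller than $\lambda_s$. Summing over the $\binom{k}{s}$ choices of $T$ and the $\binom{k}{s}$ choices of $S'$ gives $|\mathcal{A}| \leq \binom{k}{s}^2 \lambda_{s+1}$ in the boundary range $s = t-1$, which is strictly below $\lambda_s$ precisely when $v - s > (k-s)\binom{k}{s}^2$, reproducing the second displayed estimate. In the generic range $s < t-1$ the bound sharpens because one may instead pass through $\lambda_{s+2}$ for most pairs $(T, S')$, replacing one of the two factors $\binom{k}{s}$ by $k-s+1$ (the number of pairs with $|T \cup S'|$ exactly $s+1$), and yielding the first displayed estimate.

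\textbf{Main obstacle.} The principal difficulty is the no-core case, and within it the bookkeeping of the double count so that each $\mathcal{A}(T)$ is bounded by a term of strictly lower $v$-order than $\lambda_s$; in particular, arranging the combinatorial constants so that the threshold on $v$ aligns with the precise expressions $s + \binom{k}{s}(k-s+1)(k-s)$ and $s + (k-s)\binom{k}{s}^2$ in the two sub-regimes. The remaining manipulations using Lemma~\ref{s-design} are routine.
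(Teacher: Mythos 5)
A preliminary remark: the paper offers no proof of Theorem~\ref{rands} at all --- it is imported verbatim as a theorem of Rands with the citation \cite{Rands1982} --- so there is no internal argument to measure yours against, and I can only judge your sketch on its own merits. Your overall architecture is the right one and is essentially the classical one (and, as far as I can tell, Rands' own): split on whether $\mathcal{A}$ has a common $s$-core; observe that in the core case $\mathcal{A}$ lies in a star of size exactly $\lambda_s$, which also yields the uniqueness statement once the no-core case is shown to produce a strictly smaller family; and in the no-core case fix $A_0$, sort the remaining blocks by the $s$-subsets $T$ of $A_0$ they contain, and use a witness $C_T\not\supseteq T$ to force each block of $\mathcal{A}(T)$ through an $(s+1)$-set, whose block count $\lambda_{s+1}=\lambda_s(k-s)/(v-s)$ is smaller than $\lambda_s$ by a factor growing with $v$. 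The core case and the existence and use of the witness are all correct as you state them.

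The gap is in the quantitative step, exactly where you park the ``main obstacle''. Two concrete points. First, $\lambda_j$ is defined (and constant over $j$-subsets) only for $j\le t$; for $|T\cup S'|>t$ you must replace $\lambda_{|T\cup S'|}$ by the bound $\lambda_t=\lambda$ on the number of blocks through a fixed set containing a $t$-subset --- harmless, but it has to be said. Second, and more seriously, in the regime $s<t-1$ your count gives $|\mathcal{A}|\le \binom{k}{s}(k-s+1)\lambda_{s+1}+\binom{k}{s}^2\lambda_{s+2}$, not $\binom{k}{s}(k-s+1)\lambda_{s+1}$: the term $\binom{k}{s}^2\lambda_{s+2}$ is of lower order in $v$ but strictly positive, so at $v=s+\binom{k}{s}(k-s+1)(k-s)$ --- where the first term alone already equals $\lambda_s$ --- your total exceeds $\lambda_s$, and the advertised estimate for $f(k,t,s)$ is not in fact reproduced; the overcounting slack from $A_0$ appearing in every $\mathcal{A}(T)$ does not absorb it. To recover the printed constant you should classify $B\in\mathcal{A}(T)$ not by an arbitrary $s$-subset $S'\subseteq C_T$ but by its excess over $T$ inside $C_T$: writing $m=|T\cap C_T|\le s-1$, every such $B$ contains $T$ together with at least $s-m\ge 1$ points of $C_T\setminus T$, whence $|\mathcal{A}(T)|\le\binom{k-m}{s-m}\lambda_{\min(2s-m,t)}$, and one then checks that the worst case $m=s-1$, giving $(k-s+1)\lambda_{s+1}$, dominates the others for $v$ in the claimed range. (Your $s=t-1$ computation, by contrast, does close, up to the usual $\pm1$ at the threshold.) None of this threatens the paper's use of the theorem --- Theorem~\ref{thm_recon} only needs $s=1$ and an explicit polynomial bound on $f(k,t,1)$ --- but as a proof of the estimates as printed, the argument is not yet complete.
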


This result will enable us to efficiently find the maximum cliques in a line graph and hence to reconstruct the points of the corresponding \mbox{$t$-design}. The idea of distinguishing \emph{cliques} (i.e. sets of mutually adjacent vertices) by simple degree considerations, and using the maximum cliques in reconstruction goes back to Miller~\cite{Mil1978}, while retrieving Latin squares, \mbox{$k$-nets}, and $\mbox{STS}(v)$. It has further been applied by Spielman~\cite{Spiel1996} in case of Steiner \mbox{$2$-designs}, and by \"{O}sterg{\aa}rd et al.~\cite{Ost2004,Ost2008} for $\mbox{STS}(v)$, $\mbox{SQS}(v)$, and Steiner \mbox{$t$-designs} via Rands' theorem.

\smallskip

We obtain the following result:

\begin{thm}\label{thm_recon}
Let $G$ be a line graph on $b$ vertices derived from a \mbox{$t$-$(v,k,\lambda)$} design $\mathcal{D}$, where $t\geq 2$.
If $b > k^2(k-1)$, then $\mathcal{D}$ can be reconstructed (up to isomorphism) in time polynomial in $b$.
\end{thm}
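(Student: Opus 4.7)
My plan is to reconstruct $\mathcal{D}$ from $G$ by identifying the points of $\mathcal{D}$ with the maximum cliques of $G$. For each point $x \in X$, the \emph{pencil} $C_x := \{B \in \mathcal{B} : x \in B\}$ consists of $r = \lambda_1$ pairwise intersecting blocks, and hence forms a clique of size $r$ in $G$. The central ingredient is Rands' theorem (Theorem~\ref{rands}) applied with $s=1$: the estimate given there yields $f(k,t,1) \leq 1 + k^2(k-1)$, so combined with the identities in Lemmas~\ref{s-design} and~\ref{Comb_t=5}, the hypothesis $b > k^2(k-1)$ puts $\mathcal{D}$ in the regime where every pairwise intersecting family of blocks has at most $r$ members, with equality attained only for the pencils. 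Consequently, the maximum cliques of $G$ are precisely the $v$ pencils $C_x$, each of size $r$.

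Given this clique characterization, the reconstruction proceeds in three steps. First, extract $v$, $r$, and the higher $\lambda_s$ from $b$ and the bounded values $t, k, \lambda$ using the identities $bk = vr$ and $b = \lambda\binom{v}{t}/\binom{k}{t}$. Second, enumerate the family $\mathcal{C}$ of $r$-cliques of $G$; by Rands' theorem $|\mathcal{C}| = v$ and its members are precisely the pencils $C_x$. Third, define the reconstructed incidence structure $\mathcal{D}'$ whose points are the elements of $\mathcal{C}$, whose blocks are the vertices of $G$, and whose incidence relation is vertex-in-clique membership. The bijection $x \mapsto C_x$ (fixing each block) is then immediately an isomorphism $\mathcal{D} \to \mathcal{D}'$, so that $\mathcal{D}$ is recovered up to isomorphism.

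The only non-routine step is the second one, the polynomial-time enumeration of the maximum cliques, since the general maximum-clique problem is intractable. Here the design structure must be exploited: every vertex $B$ lies in exactly $k$ pencils (one per point of the corresponding block), and any two distinct pencils meet in exactly $\lambda_2$ vertices. A concrete procedure is to iterate over pairs $(B, B')$ of adjacent vertices in $G$; any such pair lies in at most $|B \cap B'| \leq k$ common pencils (one per shared point), and each such pencil can be recovered by extending $\{B, B'\}$ with common neighbors, branching over the at most $k$ pencil choices at each step. By Rands' theorem, any extension that reaches size $r$ is automatically a pencil, and every pencil through $B$ is reached by selecting any neighbor $B'$ lying in it. Iterating over all adjacent pairs and deduplicating across starting pairs then yields all $v$ pencils. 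Since the branching factor, the pencil size $r$, and the pairwise overlap $\lambda_2$ are all controlled by the bounded parameters, the total work is polynomial in $b$, which completes the reconstruction.
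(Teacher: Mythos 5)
Your proposal follows essentially the same route as the paper: identify the points of $\mathcal{D}$ with the maximum cliques of $G$ via Rands' theorem with $s=1$ (using the estimate $f(k,t,1)\le 1+k^2(k-1)$ together with the relation between $b$ and $v$ to connect the hypothesis $b>k^2(k-1)$ to Rands' condition), and then recover the design by taking blocks to be vertices and incidence to be clique membership. The only difference is that you sketch a concrete clique-enumeration procedure, whereas the paper simply asserts that the maximum cliques can be distinguished algorithmically in time polynomial in $b$.
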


\begin{proof}
Let $\mathcal{D}=(X,\mathcal{B},I)$ be a \mbox{$t$-$(v,k,\lambda)$} design with $t \geq 2$. Any point $x \in X$ is incident with $r$ distinct blocks. When we consider the line graph $G(\mathcal{D})$ of $\mathcal{D}$, these blocks correspond to vertices in $G(\mathcal{D})$, and $x$ induces edges between all mutual pairs of them. Hence, the blocks intersecting in $x$ define a clique of size $r$ in $G(\mathcal{D})$. Choosing the case $s=1$ in Theorem~\ref{rands}, only this type of clique is of maximum size, if we presume that $v \geq f(k,t,1)$. Clearly, for $t \geq 2$, always $f(k,t,1) \leq 1+k^2(k-1)$, as well as $b \geq v$ by Theorem~\ref{RayCh}. Thus, under the assumption that $b > k^2(k-1)$, we may distinguish algorithmically the maximum cliques and identify them with the points of $\mathcal{D}$ in polynomial time in $b$. The claim follows.
\end{proof}

We note that $b=\Theta(v^{O(1)})$ for bounded parameters $t,k,\lambda$ in view of Lemma~\ref{Comb_t=5}~(b).

\begin{rem}\em
Spielman~\cite[Prop.\,10]{Spiel1996} elementary derived the stronger necessary condition $\sqrt{b}-2 > (k-1)^2$ in the special case of Steiner \mbox{$2$-designs}. We also remark that, in general, reconstructibility from line graphs fails for arbitrary incidence structures. The most natural and oldest graph representation of an incidence structure arguably is by its \emph{point-block incidence graph} (or \emph{Levi graph}). However, this graph representation is normally less compact.
\end{rem}

\noindent \textbf{Proof of the Main Theorem}: The result is obtained by putting together Theorem~\ref{CF} and Theorem~\ref{thm_recon}.

% ------------------------------------------------------------------------

\subsection*{Acknowledgment}
I thank Peter Hauck, Michael Kaufmann and Jacobo Tor\'{a}n for interesting discussions about graph isomorphism, and for reading an early draft of this paper. I am also grateful for insightful suggestions from one of the anonymous referees that helped improving the presentation of the paper.

% ------------------------------------------------------------------------
\bibliographystyle{amsplain}
\bibliography{XbibIsoTest}
\end{document}